\documentclass[sigconf]{acmart}
\AtBeginDocument{%
  }

\setcopyright{acmlicensed}
\setcopyright{acmlicensed}
\copyrightyear{2026}
\acmYear{2026}
\setcopyright{cc}
\setcctype{by}
\acmConference[RecSys '26]{20th ACM Conference on Recommender Systems}{September 27-October 02, 2026}{Minneapolis, MN, USA}
\acmBooktitle{20th ACM Conference on Recommender Systems (RecSys '26), September 27-October 02, 2026, Minneapolis, MN, USA}
\acmDOI{10.1145/3773078.3831736}
\acmISBN{979-8-4007-2284-4/2026/09}

\usepackage{booktabs}
\usepackage{multirow}
\usepackage{subcaption}
\usepackage{amsthm}
\theoremstyle{remark}
\newtheorem{remark}{Remark}
\begin{document}

\title{Addressing Cross-Stage Decoupling of Semantic and Collaborative Signals in Generative Recommendation}

\author{Jiayi Dan}
\authornote{Corresponding author.}
\email{danjiayi05@kuaishou.com}
\affiliation{%
  \institution{Kuaishou Technology}
  \city{Beijing}
  \country{China}
}




\renewcommand{\shortauthors}{Jiayi Dan}

\begin{abstract}
Generative recommendation reformulates sequential recommendation as autoregressive generation by encoding items into semantic tokens, enabling improved scaling capability and cross-domain generalization. However, existing generative recommender systems typically follow a two-stage pipeline, where item tokenization is largely dominated by textual semantics with limited incorporation of collaborative signals and interaction similarity, leading to code assignments that are misaligned with downstream generation. Conversely, the generation stage tends to overlook the original semantic information, as the code sequences are re-embedded based on interaction data. This cross-stage information decoupling limits semantic coherence and recommendation accuracy.

To address this issue, we propose SCRec, a general framework that enhances cross-stage coherence through bidirectional information supplementation. Specifically, we introduce (i) collaborative-enhanced tokenization to explicitly inject textualized collaborative signals into semantic tokenization, without introducing additional alignment task, (ii) semantic-guided generation to dynamically recalibrate semantic priors with learnable code embeddings in generation stage, and (iii) manifold alignment to reconcile the geometric mismatch between the embedding space of discrete codebook indices and the dense continuous semantic space. These interrelated components form a general framework that aligns semantic and collaborative signals and enhances cross-stage information coherence, with minimal additional training and inference costs. Extensive experiments \footnote{Code is available at \url{https://github.com/DanJiayi/SCRec}.} demonstrate the effectiveness, robustness, and generalizability of our proposed framework.
\end{abstract}

\begin{CCSXML}
<ccs2012>
<concept>
<concept_id>10002951.10003317.10003347.10003350</concept_id>
<concept_desc>Information systems~Recommender systems</concept_desc>
<concept_significance>500</concept_significance>
</concept>
</ccs2012>
\end{CCSXML}

\ccsdesc[500]{Information systems~Recommender systems}

\keywords{Recommender Systems, Generative Recommendation, Collaborative-Semantic Integration, Representation Alignment}


\maketitle

\section{Introduction}
Generative recommendation has emerged as a promising paradigm that reformulates sequential recommendation as autoregressive generation. Prior work shows that representing items as semantic tokens before generation improves recommendation performance and cross-domain generalization through richer semantic abstraction. In the tokenization stage, item textual information is encoded by a pretrained language model (PLM) and quantized into discrete code sequences to capture hierarchical semantics while leveraging the rich knowledge of PLMs. During generative training, these code sequences are re-embedded based on interaction data to capture behavioral and collaborative patterns flexibly.

However, as illustrated in Figure ~\ref{fig:intro}, this two-stage paradigm introduces an information decoupling issue. The tokenization stage is typically dominated by semantic information, making it difficult to coherently inject collaborative signals and interaction similarity, resulting in code assignments that are not well suited for downstream generation. In contrast, the generation stage tends to overlook the original semantic information, whereas the next item is expected to be decoded as a semantic code sequence, leading to semantic distortion and reduced generation accuracy.

To address this semantic-collaborative decoupling issue, we propose a general and lightweight framework that improves cross-stage coherence via bidirectional information supplementation. On one hand, we introduce collaborative-enhanced tokenization to explicitly inject textualized collaborative signals into code sequences without introducing additional alignment task. On the other hand, we propose semantic-guided generation to dynamically fuse textual priors with codebook embeddings during generation. However, code embeddings and semantic representations differ not only in modality but also in their geometry: the former originates from discrete codebook indices, while the latter are dense continuous hidden states of language models. To address this underlying discrepancy, we introduce hyperbolic manifold alignment to enforce geometric consistency. These interconnected components ensure consistent use of semantic and collaborative signals across stages.

We also implement the proposed framework as a reusable module that can be integrated into a wide range of generative recommenders with minimal training and inference overhead.

Our contributions are summarized as follows:

\begin{itemize}
\item We identify the semantic–collaborative decoupling issue in existing two-stage generative recommenders and integrate collaborative-aware tokenization with semantic-guided generation to enhance cross-stage information coherence.
\item We introduce a manifold alignment mechanism to geometrically bridge token embedding space and dense semantic representations in a shared hyperbolic latent space.
\item Based on these designs, we develop a reusable module that can be incorporated into existing generative recommendation models with minimal additional overhead.
\item We conduct extensive experiments to validate the effectiveness and generality of the proposed framework.
\end{itemize}

\begin{figure}[htbp]
    \centering
    \includegraphics[width=\linewidth]{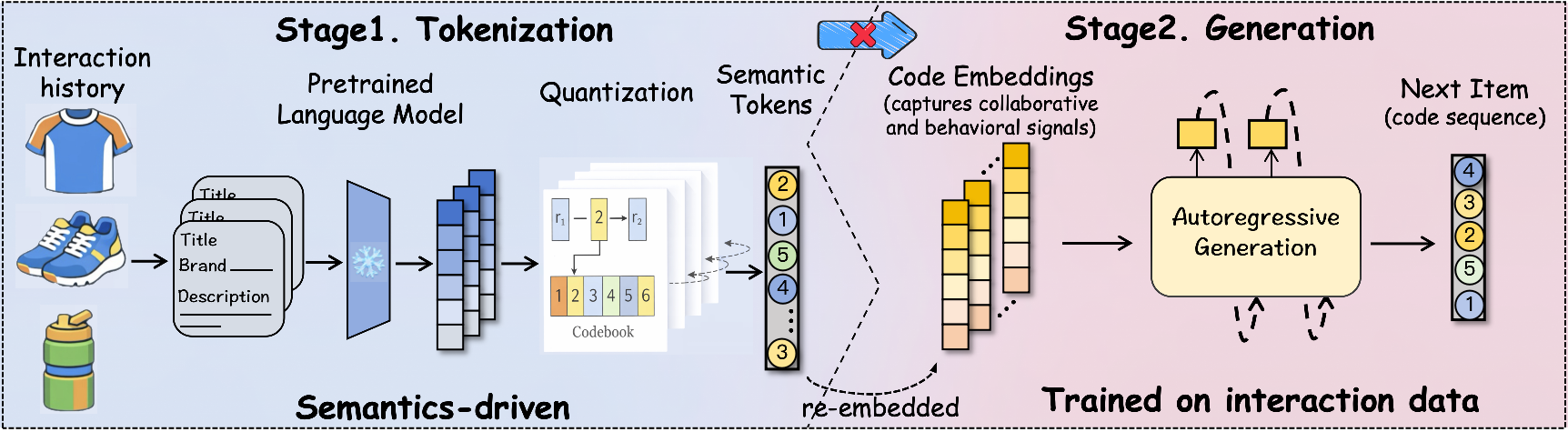}
    \caption{Illustration of Semantic ID-Based Two-Stage Generative Recommendation.}
    \label{fig:intro}
\end{figure}


\section{Related Work}
Sequential recommendation is traditionally formulated as a discriminative prediction task with large and sparse item embedding tables \cite{kang2018sasrec,sun2019bert4rec}. Recent work reframes it as sequence generation, where models autoregressively decode item identifiers or natural language descriptions \cite{rajput2023tiger, geng2022p5, bao2023tallrec, li2023gpt4rec}. Language models are widely used for contextual modeling and generation, including encoder-decoder architectures \cite{geng2022p5,chu2023leveraging,cui2022m6} and decoder-only architectures \cite{bao2023tallrec,ji2024genrec,zhou2025generative}.

Beyond the classic autoregressive paradigm, alternative generation paradigms have been explored to improve efficiency and robustness. RPG \cite{hou2025generating} and SETRec \cite{lin2025order} reformulate item identifier prediction as parallel generation, reducing decoding latency and error propagation. CAR \cite{wang2025act} introduces chunk-level modeling to capture users’ decision-making processes. Diffusion-based approaches \cite{shi2025llada,li2024recdiff,zhao2024denoising,song2025diffcl,li2025dimerec,liu2025diffgrm} enable parallel refinement of item identifiers while modeling global dependencies.

Apart from backbone choice and generation paradigms, a key challenge is how items are represented for generation. Sparse ID-based identifiers assign each item a unique numeric token \cite{geng2022p5,zhai2024actions}, which is simple but semantically opaque and memory-intensive; textual identifiers improve alignment with language models \cite{cui2022m6,lee2025gram}, but are less flexible and may suffer from hallucination \cite{zhang2026unleashing}. A dominant paradigm derives semantic IDs by discretizing continuous item representations into sequences of discrete codes \cite{rajput2023tiger, zheng2024adapting, wang2024letter,lin2025unified, zhong2025pctx}. Typically, RQ-VAE constructs coarse-to-fine representations \cite{rajput2023tiger,wang2024letter}; UTGRec \cite{zheng2025universal} models shared and incremental semantics via tree-structured codebooks; RPG \cite{hou2025generating} leverages Product Quantization to build ultra-long semantic IDs. To mitigate token collisions, \cite{lin2025unified} introduces a low-dimensional code representation, while \cite{fang2025hid} separates colliding items via a uniqueness loss. GenCDR \cite{hu2026ids} further extends semantic ID tokenization to cross-domain scenarios by injecting domain-specific signals via LoRA-based adapters into a universal RQ-VAE encoder.

However, reconstruction-centric objectives of semantic-ID tokenization may be misaligned with recommendation tasks \cite{liang2026rethinking}. Recent studies address this mismatch through joint optimization of tokenization and generation, including self-improving identifiers \cite{chen2024enhancing}, end-to-end frameworks with recommendation-oriented signals \cite{liu2025generative}, co-evolution learning \cite{wang2026pit}, and differentiable SID that update codes through gradients of the recommendation loss \cite{fu2026differentiable}.

Another challenge in tokenization for generative recommendation is integrating multi-source data, including multi-modal and collaborative signals. To incorporate multi-modal information, recent work aligns heterogeneous modalities into a unified representation space \cite{zhai2025multimodal,zhu2025beyond,zheng2025universal,wang2025generative,zhai2025simple,zhang2026multi}. Typically, MACRec \cite{zhang2026multi} promotes consistency between text and image modality tokens through implicit contrastive losses and explicit cross-modal generation tasks. Another line of work focuses on incorporating collaborative signals during tokenization. MMGRec \cite{liu2024mmgrec} extracts collaborative filtering (CF) embeddings using a pretrained model and concatenates them with semantic embeddings before quantization. However, such direct concatenation may disrupt the spatial structure of representation vectors. PRORec \cite{xiao2025progressive} aligns the two embeddings via contrastive learning and introduces an extra next-item prediction task during the tokenization stage. However, the alignment quality is not guaranteed, and it incurs noticeable training overhead since it introduces an additional training stage.
Another approach treats CF embeddings as auxiliary supervision rather than direct inputs during quantization. For example, LETTER \cite{wang2024letter} applies contrastive learning between CF embeddings and reconstructed semantic embeddings during RQ-VAE training, while COSETTE \cite{lepage2025closing} encourages the quantized space to approximate the item similarity structure derived from the co-occurrence matrix. However, these implicit approaches take only semantic embeddings as inputs, making it difficult to ensure that collaborative signals are effectively injected. Moreover, these methods fail to recover the rich semantic information encoded in code sequences in the generation stage.

\textbf{Research Gap}: Overall, existing generative recommendation frameworks typically adopt a two-stage pipeline, where the tokenization stage lacks explicit and coherent incorporation of collaborative signals, while the generation stage tends to overlook the original semantic information. Integrating and aligning the two signals to enhance cross-stage information coherence remains a crucial yet underexplored challenge.

\section{Method}
In this section, we present SCRec, a general and lightweight framework to address the cross-stage decoupling between semantic and collaborative signals in generative recommendation. 

Given a user interaction sequence $s = \{i_1, i_2, \dots, i_{t-1}\}$, the task is to predict the next item $i_t$ by generating its corresponding semantic code sequence. Our proposed method explicitly incorporate textualized collaborative signals into the semantic tokenization process, restore the original semantic priors in the generation stage, and further reconcile the intrinsic geometries of the two representations in a shared hyperbolic space. The overall framework is illustrated in Figure~\ref{fig:model}.

\begin{figure*}[t]
    \centering
    \includegraphics[width=0.78\linewidth]{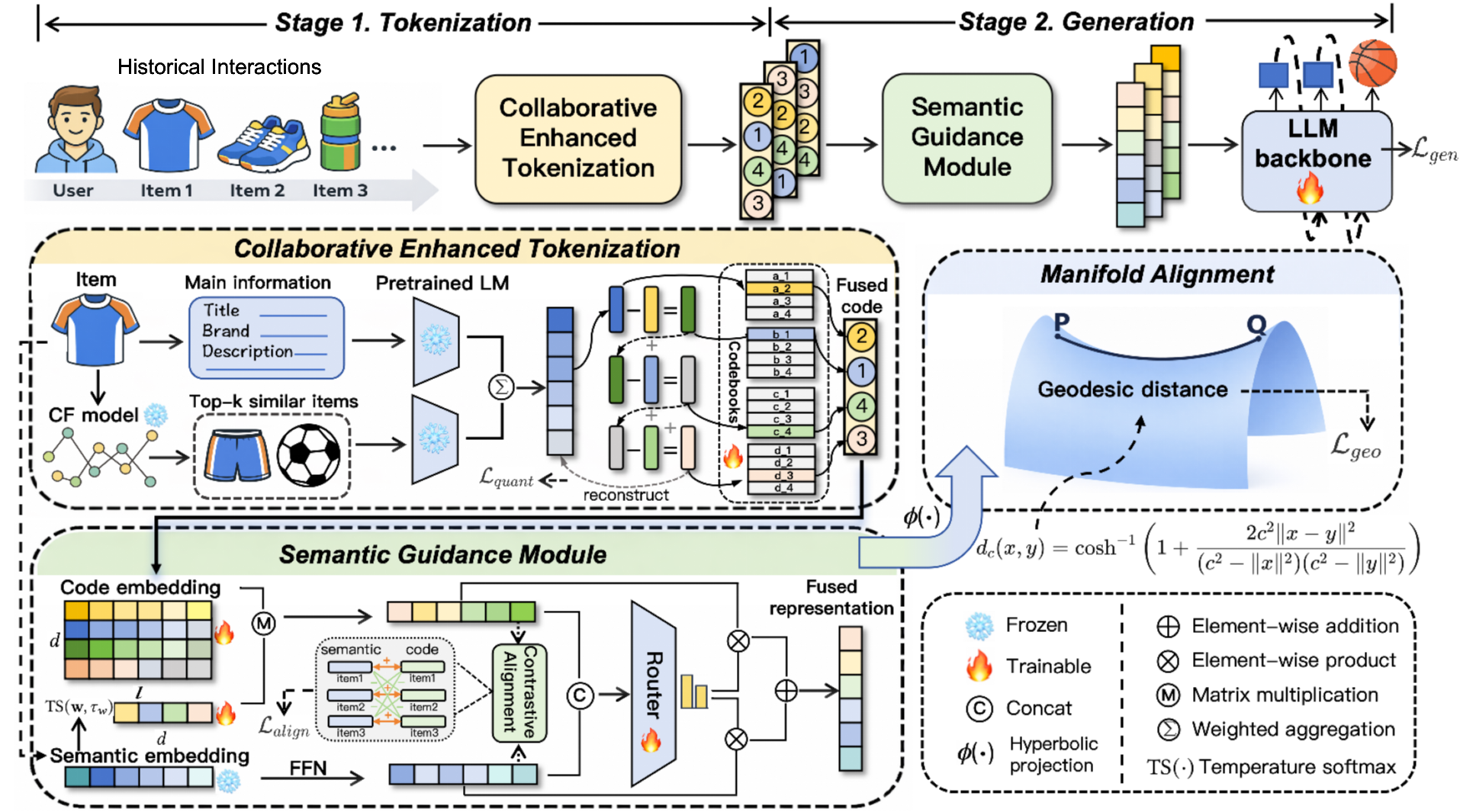}
    \caption{Overview of the proposed framework}
    \label{fig:model}
\end{figure*}

\subsection{Collaborative-Enhanced Tokenization}

The primary objective of tokenization is to represent items as discrete code sequences for downstream generation while preserving semantic structure and leveraging the rich knowledge of pretrained language models. Prior work has shown the effectiveness of representing items as hierarchical semantic code sequences with quantization techniques, typically RQ-VAE. However, existing frameworks often fail to effectively incorporate collaborative signals and interaction similarity during tokenization, leading to a mismatch between code assignment and downstream generation task.

Consider the example of running shoes and high-heeled shoes: since they share similar textual semantics, a purely semantic-driven tokenizer will assign them closely related code sequences.
However, generative models may struggle to learn optimal code embeddings from interactions, as their collaborative patterns differ significantly while the codes enforce similarity. Conversely, items with strong collaborative similarity, such as running shoes and soccer balls, may receive drastically different code sequences due to textual differences, even though they should be close in the generation embedding space, further complicating the downstream task.

To mitigate this issue, recent studies have attempted to incorporate collaborative signals into the tokenization stage. A typical approach concatenates collaborative embeddings obtained from pretrained models with semantic embeddings before quantization \cite{liu2024mmgrec}, but such direct fusion may distort the representation space due to modality mismatch. Another line of work  uses collaborative signals as additional supervision during quantization \cite{wang2024letter,lepage2025closing}. However, this implicit strategy cannot guarantee effective learning of auxiliary objectives and sufficient capture of collaborative signals, as the input consists solely of semantic embeddings.

To inject collaborative signals explicitly and coherently, we transform collaborative signals into textual form, based on the principle that items with similar collaborative and interaction patterns should have similar code assignments.
Specifically, we first employ a collaborative filtering (CF) model to identify the top-$k$ most similar items for each item $i$.
This CF model can be arbitrary, such as a simple Swing algorithm.
Alternatively, to further improve alignment with downstream generation, item similarities can be computed directly from the embeddings of a simple sequential recommendation model pretrained on past interaction data:  
\[
\{ j_1,j_2 \ldots, j_k \}
=
\operatorname*{Top\text{-}k}_{j \in \mathcal{I}} \, \mathrm{sim}(\mathbf{e}_i, \mathbf{e}_j),
\]
where $\mathcal{I}$ denotes the item set,  $\mathbf{e}_i$ denotes the item embedding  obtained from the pretrained model, $\text{sim}(\cdot)$ denotes cosine similarity.

Next, we extract semantic information from both item $i$ and its top-
$k$ most similar items. For item $i$ itself, we preserve fine-grained semantic attributes and organize them as follows:
\[
\text{Prompt}^{\text{main}}_i =
\text{``}\{\text{title}_i\}; \{\text{brand}_i\}; \{\text{category}_i\}; \{\text{description}_i\};\dots\text{''}.
\]
For similar items, we use their titles as the primary semantic cues, concatenate them into a single sequence, and truncate each title to the first 20 words to avoid being dominated by overly long texts:
\[
\text{Prompt}^{\text{CF}}_i =
\text{``}\{\text{title}_{j_1}\}; \{\text{title}_{j_2}\}; \dots; \{\text{title}_{j_k}\}\text{''}.
\]

We then encode the two semantic inputs with a pretrained language model (e.g., T5), obtain their hidden representations, and perform a weighted 
aggregation to derive a unified embedding:

\[
\tilde{\mathbf{e}}_i = a\, \text{Encode}(\text{Prompt}^{\text{main}}_i) + (1-a)\text{Encode}(\text{Prompt}^{\text{CF}}_i) ,
\]
where $a$ is a tunable hyperparameter. 

Note that both semantic and collaborative representations are extracted from textual descriptions using the same pretrained language model, so they are naturally aligned in the same space in a simple yet effective manner, while fully leveraging the knowledge of the pretrained language model. Items with similar interaction patterns and shared neighbors are assigned more similar code sequences, thereby better aligning with the downstream generation task.
The procedure can be performed offline (offline extraction of item embeddings using a pretrained recommender model is a common step in prior work \cite{liu2024mmgrec,wang2024letter,xiao2025progressive,lin2025order}), incurring no additional online training 
or inference cost. 

Finally, we employ RQ-VAE to quantize $\tilde{\mathbf{e}}_i$ into a unified code sequence that jointly captures semantic and collaborative information, and compute the quantization loss:

\begin{equation*}
\mathcal L_{\text{quant}}
=
\sum_{i \in \mathcal{I}}
\Big(
\left\| \tilde{\mathbf e}_i - \hat{\mathbf e}_i \right\|_2^2
+
\sum_{m=0}^{l-1}
\big(
\left\| \operatorname{sg}(\mathbf r_{i,m}) - \mathbf q_{i,m} \right\|_2^2 
+
\gamma \left\| \mathbf r_{i,m} - \operatorname{sg}(\mathbf q_{i,m}) \right\|_2^2
\big)
\Big)
\end{equation*}
where $\hat{\mathbf e}_i$ denotes the reconstructed representation, $\mathbf q_{i,m}$ and $\mathbf r_{i,m}$ denote the codebook embedding and the residual at stage $m$, $\operatorname{sg}(\cdot)$ is the stop-gradient operator, and $l$ is the length of the code sequence.

\subsection{Semantic-Guided Generation}
Although SID preserves the hierarchical semantic information of items, the downstream generative model re-embeds the code sequences during training for greater flexibility.
Since optimization primarily relies on historical interaction data, the learned embeddings mainly encode interaction and collaborative patterns, and the original semantic information obtained from pretrained language models will gradually be lost. Since the next item is expected to be decoded in the form of semantic code sequence, this information loss can lead to semantic distortion and reduced generation accuracy.
To address this, we introduce a semantic guidance module that explicitly restores semantic information during generation.

Since semantic representations are defined at the item level, we first transform hierarchical code embeddings into an item-level representation via a item-specific learnable vector $\mathbf{w} \in \mathbb{R}^{l}$.
To prevent the over-dominance of certain code positions and stabilize training, we optionally apply a temperature-scaled softmax to $\mathbf{w}$ to produce a more balanced weight distribution. Formally, let $\mathbf{c}_i = (c_{1i}, c_{2i}, \ldots, c_{li})$ denote the code sequence of item $i$. We first construct the learnable embedding matrix $\mathbf{E}_i \in \mathbb{R}^{l \times d}$, where the $k$-th row is the embedding of $c_{ki}$. The transformed code embedding $\mathbf{e}_i^{\text{code}} \in \mathbb{R}^{d}$ is then given by:
\[
\mathbf{w} = \mathbf{W}_h\mathbf{h}_i, \quad
\tilde{\mathbf{w}} = \operatorname{softmax}\!\left(\frac{\mathbf{w}}{\tau_w}\right), \quad
\mathbf{e}_i^{\text{code}} = \mathbf{E}_i^{\top} \tilde{\mathbf{w}},
\]
where $\mathbf{h}_i \in \mathbb{R}^{h}$ is the frozen original semantic representation, $\mathbf{W}_h \in \mathbb{R}^{l\times h}$, the temperature $\tau_w \geq 1$ and is linearly annealed to 1 during training.
$\mathbf{e}_i^{\text{code}}$ is primarily determined by the learnable code embedding matrix $\mathbf{E}$, which mainly captures collaborative and interaction patterns during generative model training. At this step, the semantic representation is only used to generate item-specific weights and does not alter the $d$-dimensional embedding space. The hierarchical cues of the code sequences can still be retained through learnable layer-specific weighting that is not globally shared.

Following this, we integrate the semantic representation with the learnable code embeddings dynamically:
\[
\begin{gathered}
\mathbf{e}^{\text{sem}}_i =  \mathbf{W}_s\mathbf{h}_i , \quad
g_i = \sigma\!\left(\mathbf{W}_g[\mathbf{e}_i^{\text{code}}; \mathbf{e}_i^{\text{sem}}]\right), \\
\mathbf{e}^{\text{fused}}_i =
g_i\,\mathbf{e}^{\text{code}}_i +
(1-g_i)\mathbf{e}^{\text{sem}}_i ,
\end{gathered}
\]
where $\mathbf{W}_s \in \mathbb{R}^{d\times h}$,
$\mathbf{W}_g \in \mathbb{R}^{1\times 2d}$, $\sigma$ is the sigmoid function, and $[\cdot;\cdot]$ denotes concatenation. This mechanism adaptively integrates collaborative representations with semantic priors according to the expressive capacity of the code embeddings. 
The computational cost of the whole procedure is approximately $2(ld + dh + lh)$ FLOPs up to lower-order terms, which does not introduce a noticeable overhead compared to the cost of the main generative model. The fusion form is interchangeable (e.g., dimension-wise or multi-head gating); here, we adopt the lowest-complexity design to reduce additional computational cost.

However, code and semantic embeddings originate from different modalities (behavioral interactions vs. textual semantics), making direct weighted fusion less effective. To capture cross-modal correspondence and facilitate downstream optimization, we introduce contrastive alignment between the two representations via the InfoNCE loss:
\[
\mathcal{L}_{align}
=
- \sum_{i \in \mathcal{I}}\log
\frac{
\exp(\mathrm{sim}(\mathbf{e}_i^\text{code}, \mathbf{e}_i^\text{sem}) / \tau)
}{
\sum_{j=1}^{N}
\exp(\mathrm{sim}(\mathbf{e}_i^\text{code}, \mathbf{e}_j^\text{sem}) / \tau)
}.
\]

With the fused representation of historical interaction items, 
we predict the code sequence of the next item using a backbone generative model and compute the corresponding generation loss $\mathcal{L}_{gen}$. The backbone model is flexible and compatible with a wide range of existing methods.


\subsection{Manifold Alignment}

While contrastive alignment mitigates the modality discrepancy between code and semantic representations at the instance level, a more intrinsic mismatch remains in their underlying geometry. Code embeddings originate from discrete and hierarchical codebook indices, whereas semantic representations correspond to the hidden states of a pretrained language model. As an intuitive analogy, directly fusing a language model’s token embedding matrix with its contextual hidden states, even if both originate from the same textual modality and are optimized under the same training objective, can still be suboptimal due to structural mismatch of representation spaces (e.g., differences in distance metrics and distributional structures). Solely aligning them with a contrastive loss based on inter-instance similarity  without resolving this discrepancy is therefore insufficient. 

To address this geometric mismatch, hyperbolic space is a valid shared latent space for aligning heterogeneous representation spaces, since it is a complete and simply connected negatively curved Riemannian manifold \cite{vinh2020hyperml}, and prior theoretical studies have shown its ability to better preserve intrinsic structural information compared to Euclidean alignment (e.g. minimizing $L_2$ distance or MMD loss)  \cite{nickel2017poincare}. As a typical example, Guo et al.~\cite{guo2021multi} project knowledge graph representations into hyperbolic space for entity alignment, and find that due to the exponential expansion property, it better preserves both hierarchical structure and similarity with less distortion, which are essential for modeling RQ-VAE-based code sequences and semantic representations, respectively. Motivated by the strong correspondence between these insights and our challenge, we further introduce hyperbolic manifold alignment to reconcile the underlying geometries of the two representation spaces.

Concretely, we project the code embedding $\mathbf{e}^{\text{code}}_{i}$ onto a $d$-dimensional Poincar\'e ball
\[
\mathbb{D}_c^d = \left\{ \mathbf{x} \in \mathbb{R}^d : \|\mathbf{x}\| < c \right\},
\]
via the exponential-map-style parametrization (following standard hyperbolic neural network formulations \cite{ganea2018hyperbolic}):
\[
\mathbf{z}_i^{\text{code}}
=
\phi(\mathbf{e}^{\text{code}}_{i})
=
\frac{c \tanh\!\left(\|\tilde{\mathbf e}^{\text{code}}_{i}\|/(2c)\right)}{\|\tilde{\mathbf e}^{\text{code}}_{i}\|}\,\tilde{\mathbf e}^{\text{code}}_{i} \in \mathbb{D}_c^d,
\]
where $c$ is the curvature hyperparameter,  $\tilde{\mathbf e}^{\text{code}}_i$ denotes the vector obtained by passing $\mathbf e^{\text{code}}_i$ through a learnable linear transformation. Similarly, we project the semantic representation $\mathbf{e}_i^{\mathrm{sem}}$ into the same hyperbolic space through a learnable projector $\psi(\cdot)$:
\[
\mathbf{z}_i^{\text{sem}} = \psi(\mathbf{e}_i^{\mathrm{sem}}) \in \mathbb{D}_c^d.
\]

We then minimize the geodesic distance between the two hyperbolic embeddings to enforce consistency between the two spaces:
\[
\begin{aligned}
\mathcal{L}_{\mathrm{geo}}
&=
\sum_{i \in \mathcal{I}}d_{\mathbb{D}}\!\left(\mathbf{z}_i^{\text{code}}, \mathbf{z}_i^{\text{sem}}\right)
\\
&=
\sum_{i \in \mathcal{I}}\cosh^{-1}\!\left(
1+
\frac{
2c^2 \left\| \mathbf{z}_i^{\text{code}} - \mathbf{z}_i^{\text{sem}} \right\|^2
}{
\left(c^2-\|\mathbf{z}_i^{\text{code}}\|^2\right)
\left(c^2-\|\mathbf{z}_i^{\text{sem}}\|^2\right)
}
\right),
\end{aligned}
\]
where $d_{\mathbb{D}}(\cdot,\cdot)$ is the geodesic distance on the Poincar\'e ball \cite{ganea2018hyperbolic,vinh2020hyperml}, and $c$ is the hyperparameter that controls the curvature of the hyperbolic space. The computations of $\phi(\cdot)$ and $d_{\mathbb{D}}(\cdot)$ are both linear in the embedding dimension, and the time complexity of computing $\mathcal{L}_{\mathrm{geo}}$ is $\mathcal{O}(Nd)$ for a batch of $N$ samples.  This is no higher than that of commonly used alignment objectives (e.g., a standard in-batch InfoNCE loss requires complexity $\mathcal{O}(N^2 d)$). Therefore, this module does not introduce noticeable computational cost.

While contrastive alignment facilitates model training by capturing cross-modal correspondences, this module further aligns the global geometry of the two representations within a shared latent hyperbolic space, yielding more coherent representations for generation.
Moreover, as we will later show empirically, reducing this underlying discrepancy also prevents the model from overly relying on semantic features, which are more easily exploited by the generative model as they correspond to the dense hidden states of the language model, thereby enabling more effective learning of code embeddings. We further provide a supplementary theoretical analysis showing that minimizing $L_{\mathrm{geo}}$ tightens an upper bound on the excess generation error induced by code embeddings under standard assumptions in Appendix~6.1.

The final loss function of the generation stage is given by:
\[
\mathcal{L} = \mathcal{L}_{gen} +\alpha\mathcal{L}_{align} + \beta\mathcal{L}_{geo}.
\]

Through the three interconnected modules proposed above, we ensures the consistent utilization of semantic and collaborative signals in both the tokenization and generation stages, mitigating the underlying information decoupling between the two stages.

\section{Experiments}
Our experiments address the following questions:
\begin{itemize}
    \item \textbf{RQ1:} 
    Does the proposed method consistently outperform baseline models across different datasets?

    \item \textbf{RQ2:} 
    Are all proposed modules in both the tokenization and generation stages individually effective and necessary for performance improvement?

    \item \textbf{RQ3:} 
    Is the proposed method robust and stable under different hyperparameter settings?

    \item \textbf{RQ4:} 
Does our method generalize effectively across different backbones with minimal additional overhead?

    \item \textbf{RQ5:} 
    Does explicitly injecting textualized collaborative signals during tokenization outperform alternative strategies?

\end{itemize}

\subsection{Setup}
\textbf{Dataset.} Following prior SID-based studies \cite{rajput2023tiger,hua2023index,jin2023language}, our experiments utilize three datasets from the Amazon Reviews \cite{mcauley2015image}: ``Beauty'', ``Sports and Outdoors'' and ``Toys and Games''. In alignment with established methodologies \cite{rajput2023tiger,zhou2020s3,hou2023learning}, we conceptualize a user's historical reviews as ``interactions.'' These interactions are sorted chronologically---from earliest to latest---to construct the input sequences. For our evaluation framework, we apply the standard leave-last-out protocol \cite{rajput2023tiger,kang2018sasrec,zhao2022revisiting}. Under this setup, the final item in a user's sequence is designated for testing, the penultimate item is held out for validation, and the remainder forms the training set. This is unified across all baselines to ensure fair comparison.

\textbf{Baselines.}
We include a diverse set of baselines, including recent and related representative methods across diverse generative paradigms to ensure a comprehensive evaluation:

\begin{itemize}

\item \textbf{Caser}~\cite{tang2018personalized} models sequential item ID patterns with CNNs to capture local dependencies.

\item \textbf{GRU4Rec}~\cite{hidasi2015session} uses a recurrent neural network for session-based recommendation.

\item \textbf{HGN}~\cite{ma2019hierarchical} extends RNN-based recommendation with gating mechanisms for improved sequential modeling.

\item \textbf{BERT4Rec}~\cite{sun2019bert4rec} applies a bidirectional Transformer with a Cloze-style objective to model item ID sequences.

\item \textbf{SASRec}~\cite{kang2018sasrec} employs a self-attention Transformer decoder for sequential recommendation.

\item \textbf{FDSA}~\cite{zhang2019feature} encodes item ID and feature sequences separately using self-attention.

\item \textbf{S$^3$-Rec}~\cite{zhou2020s3} performs self-supervised pretraining to capture correlations between item attributes and IDs before fine-tuning for next-item prediction.

\item \textbf{VQRec}~\cite{hou2023learning} converts items into semantic IDs via product quantization and represents items using pooled semantic ID embeddings.

\item \textbf{RecJPQ}~\cite{petrov2024recjpq} replaces item embeddings with shared sub-embeddings obtained through joint product quantization.

\item \textbf{HSTU}~\cite{zhai2024actions} discretizes raw item features into tokens as generative inputs.

\item \textbf{TIGER}~\cite{rajput2023tiger} tokenizes items into semantic IDs via RQ-VAE and predicts tokens autoregressively.

\item \textbf{LIGER}~\cite{yang2024unifying} extends TIGER by jointly supporting generative and dense retrieval through hybrid training and ranking.

\item \textbf{CoST}~\cite{zhu2024cost} introduces contrastive quantization to learn semantic tokens while preserving item neighborhood relations.

\item \textbf{ETEGRec}~\cite{liu2025generative} an end-to-end training framework with alternating optimization and recommendation-aware objectives.

\item \textbf{LC-Rec}~\cite{zheng2024adapting} leverages large language models for recommendation and enhances collaborative signals during generation.

\item \textbf{RPG}~\cite{hou2025generating} generates unordered token sequences via OPQ and accelerates inference using graph-based search.

\item \textbf{LLaDA}~\cite{shi2025llada} formulates SID generation as a discrete diffusion process with parallel token prediction and adaptive decoding.

\item \textbf{CoFiRec}~\cite{wei2025cofirec} adopts a coarse-to-fine tokenization framework to hierarchically generate semantic IDs.

\item \textbf{PIT}~\cite{wang2026pit} a very recent work that proposes a dynamic item tokenizer co-evolving with the generative recommender.

\end{itemize}

\textbf{Evaluation.} We adopt Recall@$K$ and NDCG@$K$ as evaluation metrics with $K \in \{5,10\}$, following Rajput et al.~\cite{rajput2023tiger}. 
The model achieving the best NDCG@$10$ on the validation set is selected for final evaluation on the test set.

\textbf{Implementation details.} 
We train the model for a maximum of 300 epochs with a batch size of 256 and a learning rate of 0.01, and employ early stopping with a patience of 20 based on validation NDCG@10. For the main experiments, we adopt a decoder-only backbone generative model following RPG ~\cite{hou2025generating}, and further evaluate our method on alternative backbone architectures in the subsequent section, where all hyperparameters and settings are kept consistent with the corresponding backbone. For the tokenization, we use the pretrained sentence-t5-base, which is commonly adopted in baselines, to extract item embeddings, and train the RQ-VAE for 5000 epochs using a learning rate of 0.001 and a batch size of 4096, the codebook size, latent space dimension, and the coefficient $\gamma$ are set to 256, 256, and 0.25, respectively.
We tune the loss weights $\alpha$ and $\beta$, the hyperbolic curvature $c$ in $\{0.2, 0.5, 1\}$, and $a$ in $\{0.2, 0.5, 0.8\}$ according to the validation $NDCG@10$. $\tau$ and $k$ are set to 0.1 and 10, respectively.
$\tau_w$ is linearly annealed from 2 to 1.
Other basic settings (optimizer, configurations for training and evaluation, data preprocessing and splitting, etc.) are kept consistent with RPG~\cite{hou2025generating}. 
Consistent with~\cite{lee2025gram}, the top-$10$ similar items are identified using SASRec pretrained using the training data. All experiments are conducted on a single RTX 4090 GPU. Baselines including RPG are rerun under the same environment using the official open-source implementation and strictly following the instructions, both TIGER and LIGER are evaluated using LIGER's open-source codebase.


\subsection{Overall Performance}
We compare our method with a diverse set of baselines on three widely used datasets. As shown in Table~\ref{tab:main_exp}, our method consistently outperforms all baseline methods, achieving 13\%–27\% gains over the second-best method across four metrics. These results demonstrate strong performance and robustness of our method, providing a clear answer to RQ1. 

Among the baselines, semantic ID–based methods significantly outperform item ID–based ones, highlighting the importance of semantic tokenization. However, due to the decoupling of semantic and collaborative signals across tokenization and generation, their performance remains suboptimal.

\begin{table*}[t]
\centering
\small
\caption{Performance comparison among baselines and the proposed method. The best performance score is denoted in bold.}
\label{tab:main_exp}
\resizebox{\textwidth}{!}{
\begin{tabular}{lcccccccccccc}
\toprule
\multirow{2}{*}{Model} &
\multicolumn{4}{c}{Beauty} &
\multicolumn{4}{c}{Sports and Outdoors} &
\multicolumn{4}{c}{Toys and Games} \\
\cmidrule(lr){2-5}\cmidrule(lr){6-9}\cmidrule(lr){10-13}
& R@5 & N@5 & R@10 & N@10
& R@5 & N@5 & R@10 & N@10
& R@5 & N@5 & R@10 & N@10 \\
\midrule

\multicolumn{13}{c}{\textit{Item ID-based}} \\
\midrule

Caser   & 0.0205 & 0.0131 & 0.0347 & 0.0176 & 0.0116 & 0.0072 & 0.0194 & 0.0097 & 0.0166 & 0.0107 & 0.0270 & 0.0141 \\
GRU4Rec & 0.0164 & 0.0099 & 0.0283 & 0.0137 & 0.0129 & 0.0086 & 0.0204 & 0.0110 & 0.0097 & 0.0059 & 0.0176 & 0.0084 \\
HGN     & 0.0325 & 0.0206 & 0.0512 & 0.0266 & 0.0189 & 0.0120 & 0.0313 & 0.0159 & 0.0321 & 0.0221 & 0.0497 & 0.0277 \\
BERT4Rec& 0.0203 & 0.0124 & 0.0347 & 0.0170 & 0.0115 & 0.0075 & 0.0191 & 0.0099 & 0.0116 & 0.0071 & 0.0203 & 0.0099 \\
SASRec  & 0.0387 & 0.0249 & 0.0605 & 0.0318 & 0.0233 & 0.0154 & 0.0350 & 0.0192 & 0.0463 & 0.0306 & 0.0675 & 0.0374 \\
FDSA    & 0.0267 & 0.0163 & 0.0407 & 0.0208 & 0.0182 & 0.0122 & 0.0288 & 0.0156 & 0.0228 & 0.0140 & 0.0381 & 0.0189 \\
S$^3$-Rec & 0.0387 & 0.0244 & 0.0647 & 0.0327 & 0.0251 & 0.0161 & 0.0385 & 0.0204 & 0.0443 & 0.0294 & 0.0700 & 0.0376 \\

\midrule
\multicolumn{13}{c}{\textit{Semantic ID-based}} \\
\midrule

RecJPQ & 0.0311 & 0.0167 & 0.0482 & 0.0222 & 0.0141 & 0.0076 & 0.0220 & 0.0102 & 0.0331 & 0.0182 & 0.0484 & 0.0231 \\
VQ-Rec & 0.0457 & 0.0317 & 0.0664 & 0.0383 & 0.0208 & 0.0144 & 0.0300 & 0.0173 & 0.0497 & 0.0346 & 0.0737 & 0.0423 \\
HSTU   & 0.0469 & 0.0314 & 0.0704 & 0.0389 & 0.0258 & 0.0165 & 0.0414 & 0.0215 & 0.0433 & 0.0281 & 0.0669 & 0.0357 \\
Tiger  & 0.0378 & 0.0243 & 0.0592 & 0.0312 & 0.0230 & 0.0146 & 0.0346 & 0.0184 & 0.0381 & 0.0245 & 0.0594 & 0.0313 \\
Liger  & 0.0456 & 0.0291 & 0.0697 & 0.0368 & 0.0281 & 0.0184 & 0.0419 & 0.0228 & 0.0474 & 0.0296 & 0.0704 & 0.0370 \\
CoST & 0.0446 & 0.0298 & 0.0688 & 0.0375 & 0.0246 & 0.0158 & 0.0398 & 0.0207 & 0.0458 & 0.0311 & 0.0683 & 0.0383 \\
ETEGRec & 0.0408 & 0.0270 & 0.0657 & 0.0344 & 0.0269 & 0.0164 & 0.0435 & 0.0221 & 0.0391 & 0.0245 & 0.0602 & 0.0321 \\
LC-Rec  & 0.0433 & 0.0289 & 0.0653 & 0.0360 & 0.0271 & 0.0177 & 0.0428 & 0.0228 & 0.0466 & 0.0323 & 0.0687 & 0.0394 \\
RPG &
\underline{0.0547} & \underline{0.0380} & \underline{0.0795} & \underline{0.0460} &
\underline{0.0299} & \underline{0.0207} & 0.0440 & \underline{0.0252} &
\underline{0.0576} & \underline{0.0394} & \underline{0.0857} & \underline{0.0484} \\
LLaDA  & 0.0435 & \underline{0.0306} & 0.0620 & 0.0364 & 0.0269 & 0.0179 & 0.0422 & 0.0228 & 0.0374 & 0.0263 & 0.0543 & 0.0318 \\
CoFiRec& 0.0444 & 0.0295 & 0.0679 & 0.0370 & \underline{0.0290} & 0.0180 & \underline{0.0456} & \underline{0.0245} & 0.0480 & 0.0307 & 0.0785 & 0.0405 \\
S²GR & 0.0465 & 0.0301 & 0.0755 & 0.0401 & 0.0282 & 0.0183 & 0.0451 & 0.0238 & 0.0505 & 0.0322 & 0.0821 & 0.0423 \\
PIT     & \underline{0.0479} & 0.0305 & \underline{0.0783} & \underline{0.0403} & 0.0284 & \underline{0.0186} & \underline{0.0462} & 0.0243 & \underline{0.0516} & \underline{0.0328} & \underline{0.0839} & \underline{0.0432} \\

\midrule
\textbf{Ours} & \textbf{0.0644} & \textbf{0.0454} & \textbf{0.0930} & \textbf{0.0546} & \textbf{0.0380} & \textbf{0.0258} & \textbf{0.0570} & \textbf{0.0319} & \textbf{0.0709} & \textbf{0.0497} & \textbf{0.0973} & \textbf{0.0582} \\
Rel. Impr. $\uparrow$ &
17.73\% & 19.47\% & 16.98\% & 18.70\% &
27.09\% & 24.64\% & 23.38\% & 26.59\% &
23.09\% & 26.14\% & 13.54\% & 20.25\% \\

\bottomrule
\end{tabular}
}
\end{table*}

\subsection{Ablation Study}
We conduct ablation studies to evaluate each component of our framework. As shown in Table~\ref{tab:ablation}, removing collaborative signals in tokenization (i.e., $a=1$) or removing semantic fusion in generation (i.e., $g=1$ and excluding additional alignment modules) both lead to clear performance drops, validating the importance of bidirectional information supplementation.
Furthermore, removing either contrastive alignment or manifold alignment in the generation stage consistently degrades performance. Notably, even with optimally weighted contrastive loss, manifold alignment still yields significant gains, highlighting the necessity of reconciling the underlying geometries of the two representation spaces. 

These results are consistent across all datasets and metrics, confirming the robustness of our findings and providing a comprehensive answer to RQ2.

\begin{table*}[t]
\centering
\caption {Results of ablation study. Rows 2–5 report the evaluation results after ablating collaborative tokenization, semantic guidance, manifold alignment, and contrastive alignment, respectively. When semantic guidance is ablated, the alignment modules in the last two rows are also removed since only one representation remains.}
\label{tab:ablation}
\small
\begin{tabular}{lcccccccccccc}
\toprule
\multirow{2}{*}{Model}
& \multicolumn{4}{c}{Beauty}
& \multicolumn{4}{c}{Sports and Outdoors}
& \multicolumn{4}{c}{Toys and Games} \\
\cmidrule(lr){2-5} \cmidrule(lr){6-9} \cmidrule(lr){10-13}
& R@5 & N@5 & R@10 & N@10
& R@5 & N@5 & R@10 & N@10
& R@5 & N@5 & R@10 & N@10 \\
\midrule
Ours
& \textbf{0.0644} & \textbf{0.0454} & \textbf{0.0930} & \textbf{0.0546} & \textbf{0.0380} & \textbf{0.0258} & \textbf{0.0570} & \textbf{0.0319} & \textbf{0.0709} & \textbf{0.0497} & \textbf{0.0973} & \textbf{0.0582} \\

Ours w/o. CT
& 0.0584 & 0.0402 & 0.0896 & 0.0502
& 0.0314 & 0.0229 & 0.0460 & 0.0275
& 0.0621 & 0.0442 & 0.0922 & 0.0538 \\

Ours w/o. SG
& 0.0524 & 0.0369 & 0.0757 & 0.0444
& 0.0284 & 0.0188 & 0.0439 & 0.0237
& 0.0556 & 0.0386 & 0.0795 & 0.0463 \\

Ours w/o. MA
& 0.0570 & 0.0402 & 0.0834 & 0.0491
& 0.0351 & 0.0241 & 0.0525 & 0.0297
& 0.0666 & 0.0464 & 0.0926 & 0.0547 \\

Ours w/o. CA
& 0.0605 & 0.0423 & 0.0870 & 0.0509
& 0.0356 & 0.0248 & 0.0530 & 0.0305
& 0.0664 & 0.0476 & 0.0923 & 0.0558 \\

\bottomrule
\end{tabular}
\end{table*}

\subsection{Sensitivity Analysis}
To examine the robustness of our method, we conduct extensive sweeps over the loss weights $\alpha$, $\beta$, and the hyperbolic curvature $c$ across all datasets. As shown in Figure~\ref{fig:sens}, the performance of our method remains stable under different settings and consistently outperforms all baselines on all datasets and metrics. These results confirm the stability and effectiveness of our approach, providing a clear answer to RQ3.

\begin{figure}
    \centering
    \includegraphics[width=\linewidth]{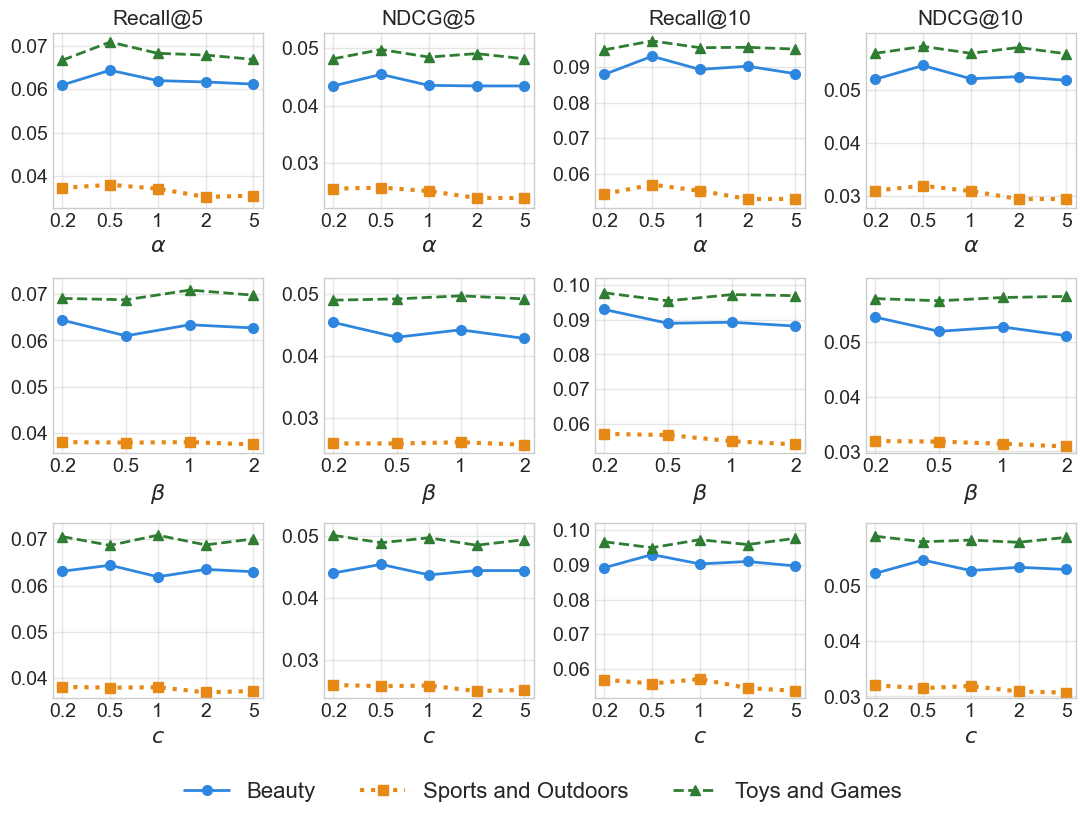}
    \caption{Hyperparameter Sensitivity Analysis.}
    \label{fig:sens}
\end{figure}


\subsection{Generalizability and Efficiency}
Our proposed framework is lightweight and agnostic to specific quantization or generation methods. To demonstrate its generalizability, we implement it as a reusable module and evaluate it on different backbones. Specifically, we adopt the representative method TIGER (RQ-VAE + T5), and LIGER (a recent and representative two-stage framework unifying generative and dense retrieval). We compare each backbone with and without our module under identical settings. As shown in Table~\ref{tab:plug}, our module consistently improves performance across all datasets, demonstrating strong generalizability and answering RQ4.

Moreover, our module does not significantly increase the cost of online training or inference. On Beauty dataset, after integrating our module, the per-step training time and inference time (averaged over 10 runs) of TIGER increase by only 6.70\% and 6.19\%, respectively, while the corresponding increases for LIGER are merely 1.98\% and 0.54\%. As demonstrated in Figure ~\ref{fig:plug}, this is minimal compared to the performance gains, further confirming the strong generalizability and deployment potential of our framework.
\begin{figure}[htbp]
    \centering
    \begin{subfigure}{0.495\linewidth}
        \centering
        \includegraphics[width=\linewidth]{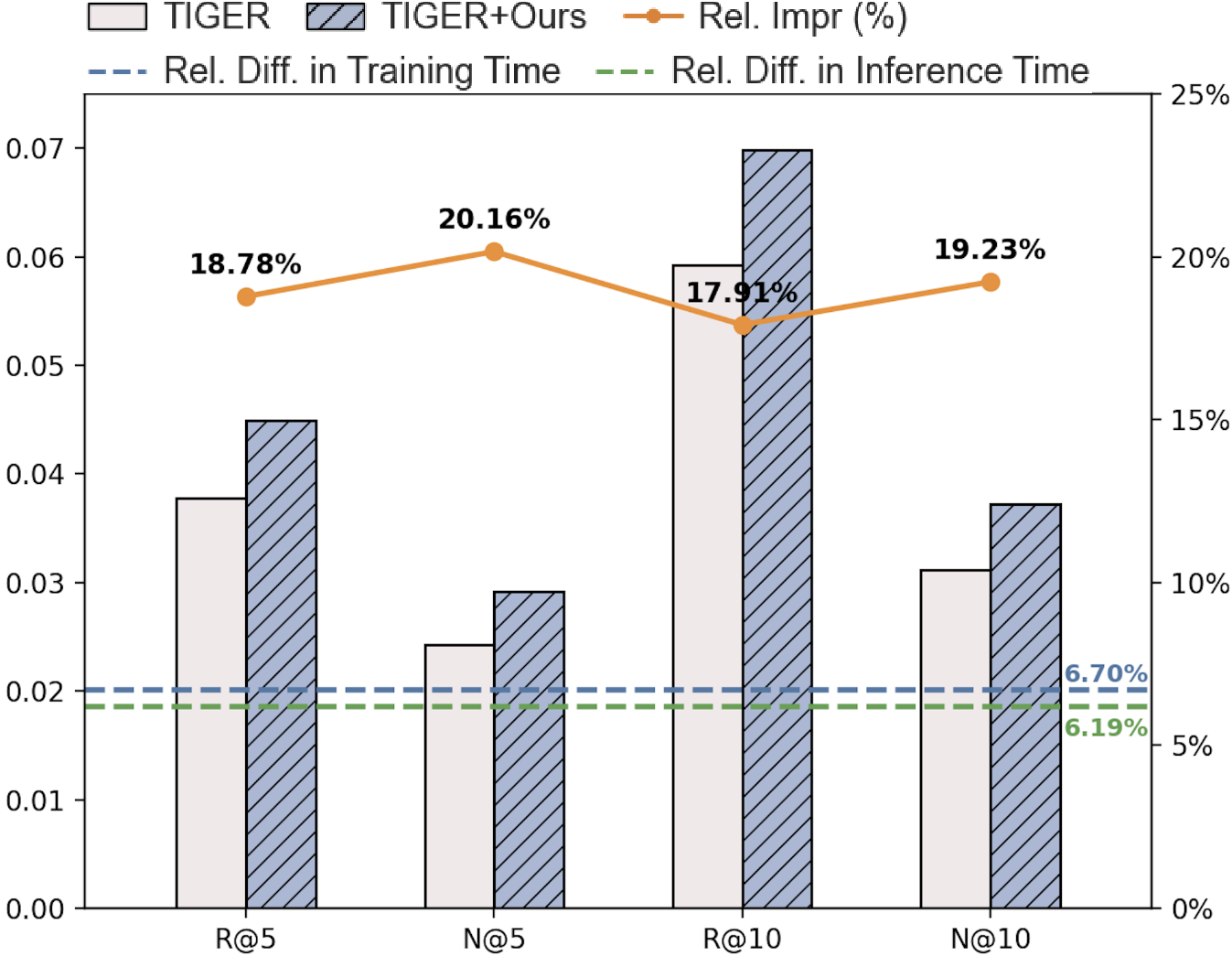}
        \caption{Tiger}
    \end{subfigure}
    \hfill
    \begin{subfigure}{0.495\linewidth}
        \centering
        \includegraphics[width=\linewidth]{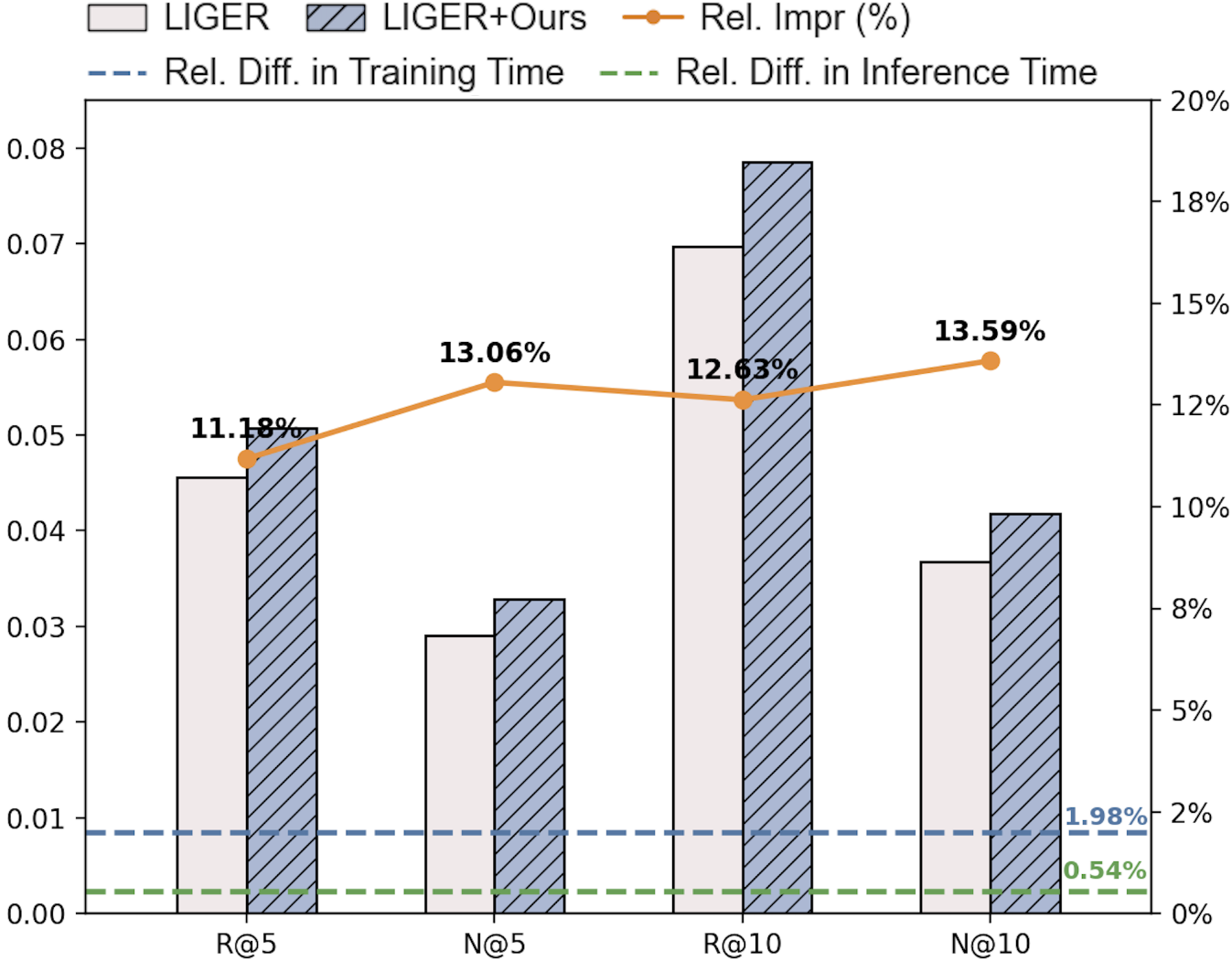}
        \caption{Liger}
    \end{subfigure}

    \caption{Effect of our module across different backbones.}
    \label{fig:plug}
\end{figure}

\begin{table*}[t]
\centering
\caption{The performance of our general module across different backbone models.}
\label{tab:plug}
\small
\begin{tabular}{lcccccccccccc}
\toprule
\multirow{2}{*}{Model}
& \multicolumn{4}{c}{Beauty}
& \multicolumn{4}{c}{Sports and Outdoors}
& \multicolumn{4}{c}{Toys and Games} \\
\cmidrule(lr){2-5} \cmidrule(lr){6-9} \cmidrule(lr){10-13}
& R@5 & N@5 & R@10 & N@10
& R@5 & N@5 & R@10 & N@10
& R@5 & N@5 & R@10 & N@10 \\
\midrule
TIGER
& 0.0378 & 0.0243 & 0.0592 & 0.0312
& 0.0230 & 0.0146 & 0.0346 & 0.0184
& 0.0381 & 0.0245 & 0.0594 & 0.0313 \\

TIGER+Ours
& \textbf{0.0449} & \textbf{0.0292} & \textbf{0.0698} & \textbf{0.0372}
& \textbf{0.0259} & \textbf{0.0170} & \textbf{0.0396} & \textbf{0.0213}
& \textbf{0.0442} & \textbf{0.0284} & \textbf{0.0697} & \textbf{0.0366} \\

Rel. Impr.$\uparrow$
& 18.78\% & 20.16\% & 17.91\% & 19.23\%
& 12.61\% & 16.44\% & 14.45\% & 15.76\%
& 16.01\% & 15.92\% & 17.34\% & 16.93\% \\

\midrule
LIGER
& 0.0456 & 0.0291 & 0.0697 & 0.0368 & 0.0281 & 0.0184 & 0.0419 & 0.0228 & 0.0474 & 0.0296 & 0.0704 & 0.0370 \\

LIGER+Ours
& \textbf{0.0507} & \textbf{0.0329} & \textbf{0.0785} & \textbf{0.0418}
& \textbf{0.0314} & \textbf{0.0205} & \textbf{0.0460} & \textbf{0.0252}
& \textbf{0.0536} & \textbf{0.0336} & \textbf{0.0786} & \textbf{0.0417} \\

Rel. Impr.$\uparrow$
& 11.18\% & 13.06\% & 12.63\% & 13.59\% & 11.74\% & 11.41\% & 9.79\% & 10.53\% & 13.08\% & 13.51\% & 11.65\% & 12.70\% \\

\bottomrule
\end{tabular}
\end{table*}

\subsection{Further Analysis}
\subsubsection{Comparison with Alternative Collaborative Tokenization Strategies}
As discussed, we enhance consistency between tokenization and downstream generation by explicitly incorporating textualized collaborative signals. While ablation studies verify its necessity, we further compare with representative alternatives of comparable complexity (the model for CF embeddings extraction is unified for fair comparison):
\begin{itemize} 
\item Following MMGRec \cite{liu2024mmgrec}, CF embeddings are concatenated with semantic embeddings before quantization. 
\item Following LETTER \cite{wang2024letter}, collaborative signals are introduced via contrastive learning between CF embeddings and the reconstructed semantic embeddings in RQ-VAE training. \end{itemize}
All other components (including the generation model and alignment module) remain unchanged. 

As shown in Table~\ref{tab:alter}, replacing our collaborative fusion strategy with these alternatives leads to noticeable performance degradation. The second alternative method underperforms slightly compared to removing collaborative signals on the latter two datasets, which is consistent with the results in \cite{lepage2025closing}. As discussed, the first alternative may distort the representation space due to modality mismatch; the second takes only semantic embeddings as inputs, such an implicit fusion mechanism cannot ensure effective optimization of the auxiliary objectives or sufficient capture of collaborative signal, and may interfere with the optimization of the original quantization reconstruction objective. In contrast, our approach adopts a simple yet effective strategy that explicitly integrates semantic and collaborative signals during tokenization. The two representations are naturally aligned since they are extracted by the same language model. These results further validate the advantage of our design and address RQ5.

\begin{table}[t]
\centering
\caption{Comparison with alternative strategies for collaborative signal fusion in tokenization (NDCG@10).}

\label{tab:alter}
\resizebox{1\linewidth}{!}{
\begin{tabular}{lccc}
\toprule
 & Beauty & Sports and Outdoors & Toys and Games \\
\midrule
MMGRec based & 0.0512 & 0.0280 & 0.0557 \\
LETTER based & 0.0523 & 0.0258 & 0.0526 \\
Ours & \textbf{0.0546} & \textbf{0.0319} & \textbf{0.0582} \\
w/o. CT & 0.0502 & 0.0275 & 0.0538 \\
\bottomrule
\end{tabular}
}
\end{table}

\subsubsection{Comparison with Other Alignment Methods}
To validate the necessity of the Manifold Alignment module, we replaced it with commonly used Euclidean $L_2$ distance minimization and MMD loss, while keeping all other modules unchanged. As shown in Table~\ref{tab:otherrep}, these alternatives perform noticeably worse than our proposed method (with one result falling below that of the ablation variant). These results confirm the superiority of aligning the two representations in a shared hyperbolic latent space, which preserves both hierarchical structure and similarity.

\begin{table}[t]
\centering
\caption{Comparison with alternative methods for representation alignment (NDCG@10).}

\label{tab:otherrep}
\resizebox{1\linewidth}{!}{
\begin{tabular}{lccc}
\toprule
 & Beauty & Sports and Outdoors & Toys and Games \\
\midrule
$L_2$ distance & 0.0517 & 0.0285 & 0.0551 \\
MMD loss & 0.0504 & 0.0306 & 0.0573 \\
Ours & \textbf{0.0546} & \textbf{0.0319} & \textbf{0.0582} \\
w/o. MA & 0.0487 & 0.0297 & 0.0547 \\
\bottomrule
\end{tabular}
}
\end{table}

\subsubsection{Visualization}
For a more fine-grained analysis of the role of manifold alignment, we visualize the code and semantic representations in a two-dimensional space using t-SNE. Taking the Beauty dataset as an example, Figure~\ref{fig:tsne} compares the visualizations before and after manifold alignment. We observe that aligning the two representations in a shared hyperbolic space brings their distributions significantly closer in the projected space.
Note that this visualization is obtained after incorporating contrastive alignment based on inter-instance similarity; the additional improvement brought by manifold alignment further demonstrates that reconciling the underlying geometries of the two representation spaces is effective for learning coherent representations for generation.

Furthermore, we analyze the distribution of the gating weight $g$ with and without manifold alignment. As shown in Figure~\ref{fig:hist}, $g$ is noticeably smaller without alignment, with a substantial portion close to $0$, while it becomes more symmetric and concentrates around $0.5$ with alignment. This is likely because the semantic representation corresponds to the dense hidden states of a pretrained language model, which are easier for the generative model to exploit, whereas the code sequences are newly embedded.
As a result, the model tends to rely more on the semantic representation and assigns uneven weights during fusion, especially in early training. By reducing the underlying discrepancy between the two spaces, manifold alignment encourages more balanced weight allocation, enabling more effective learning of code embeddings and better preserving the information encoded in the code sequences, thereby strengthening cross-stage information transfer.
\begin{figure}[htbp]
    \centering
    \begin{subfigure}{0.49\linewidth}
        \centering
        \includegraphics[width=\linewidth]{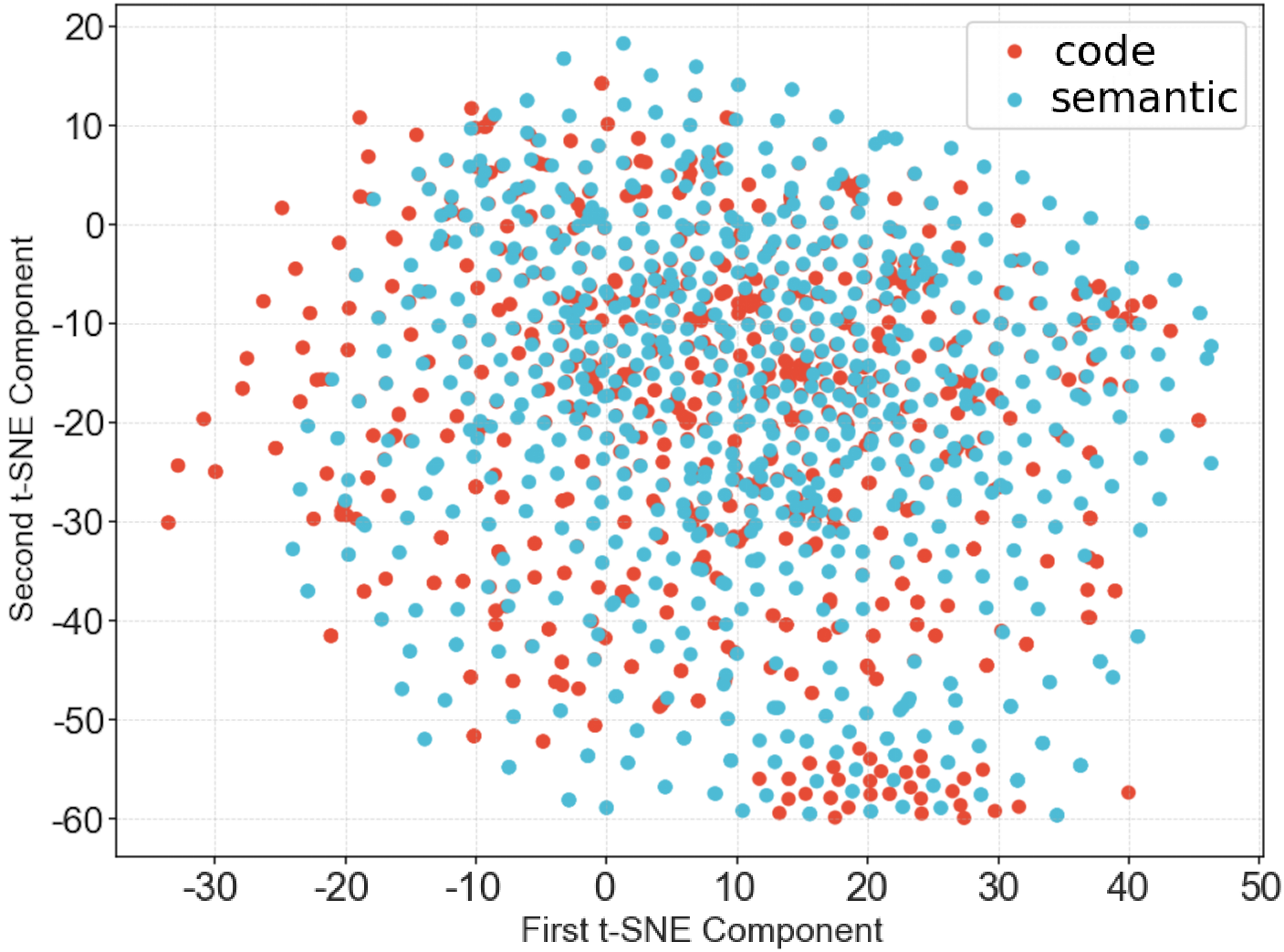}
        \caption{with manifold alignment}
    \end{subfigure}
    \hfill
    \begin{subfigure}{0.49\linewidth}
        \centering
        \includegraphics[width=\linewidth]{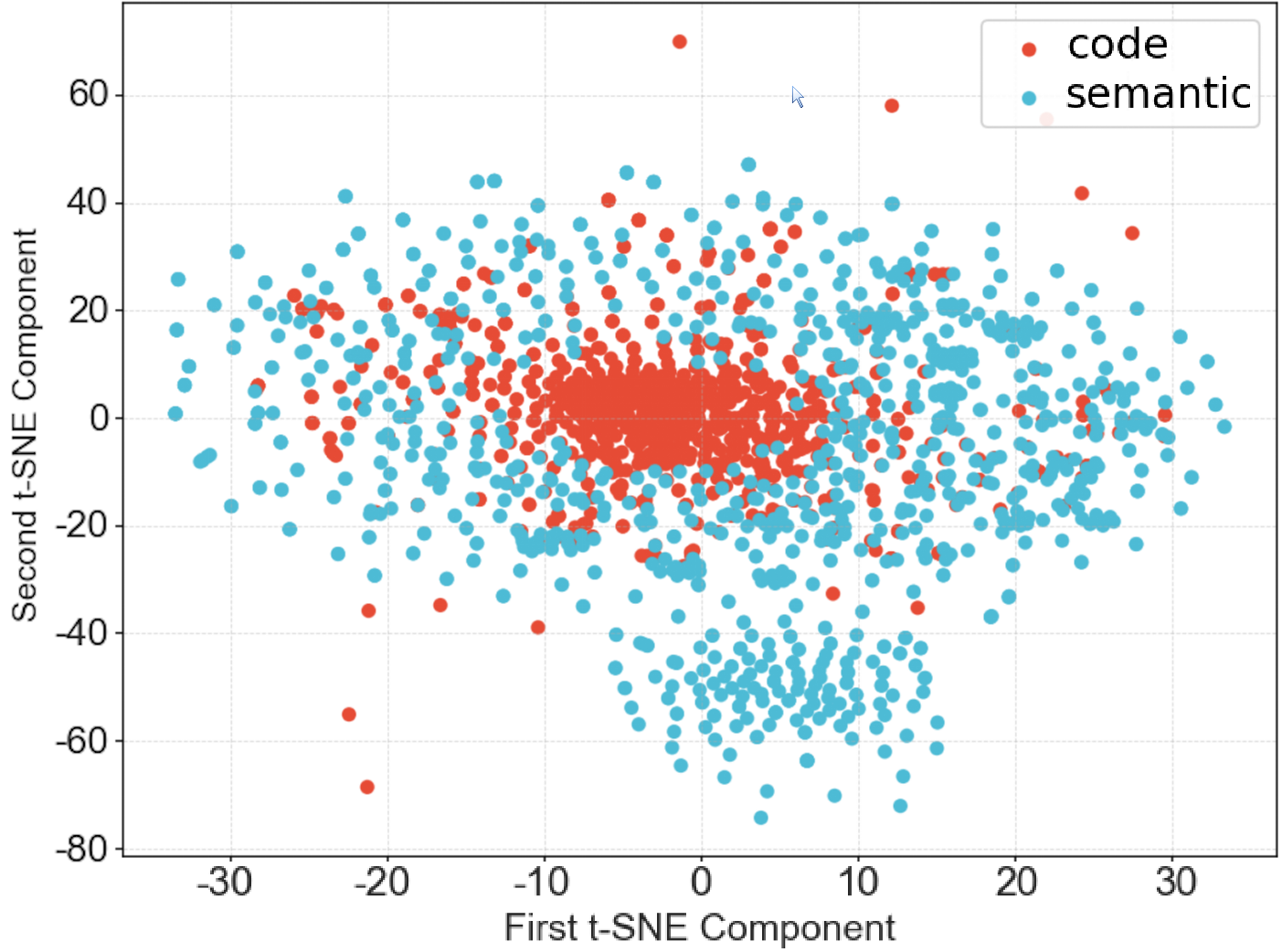}
        \caption{without manifold alignment}
    \end{subfigure}

    \caption{2D visualization of two representations.}
    \label{fig:tsne}
\end{figure}


\begin{figure}[htbp]
    \centering
    \begin{subfigure}{0.49\linewidth}
        \centering
        \includegraphics[width=\linewidth]{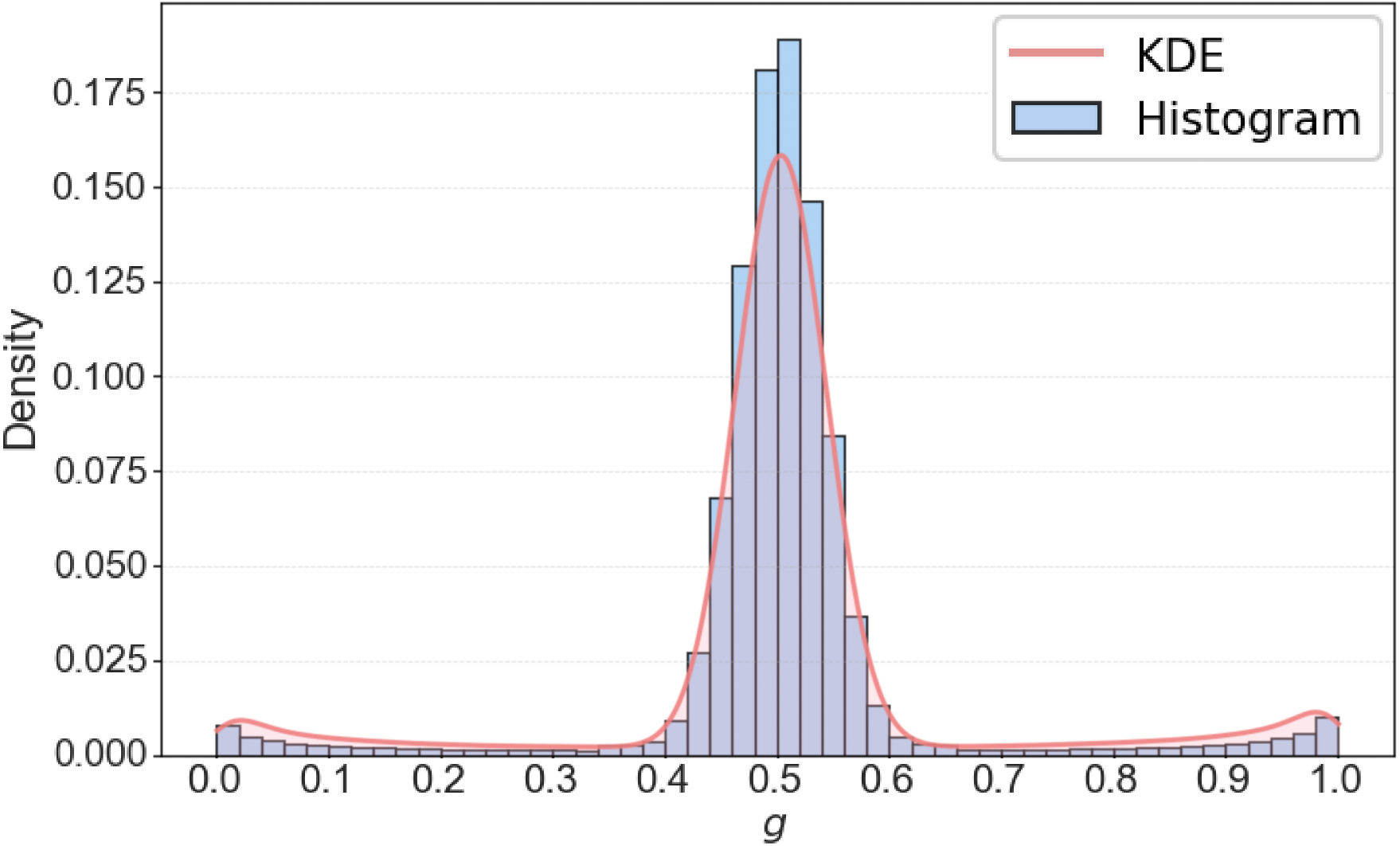}
        \caption{with manifold alignment}
    \end{subfigure}
    \hfill
    \begin{subfigure}{0.49\linewidth}
        \centering
        \includegraphics[width=\linewidth]{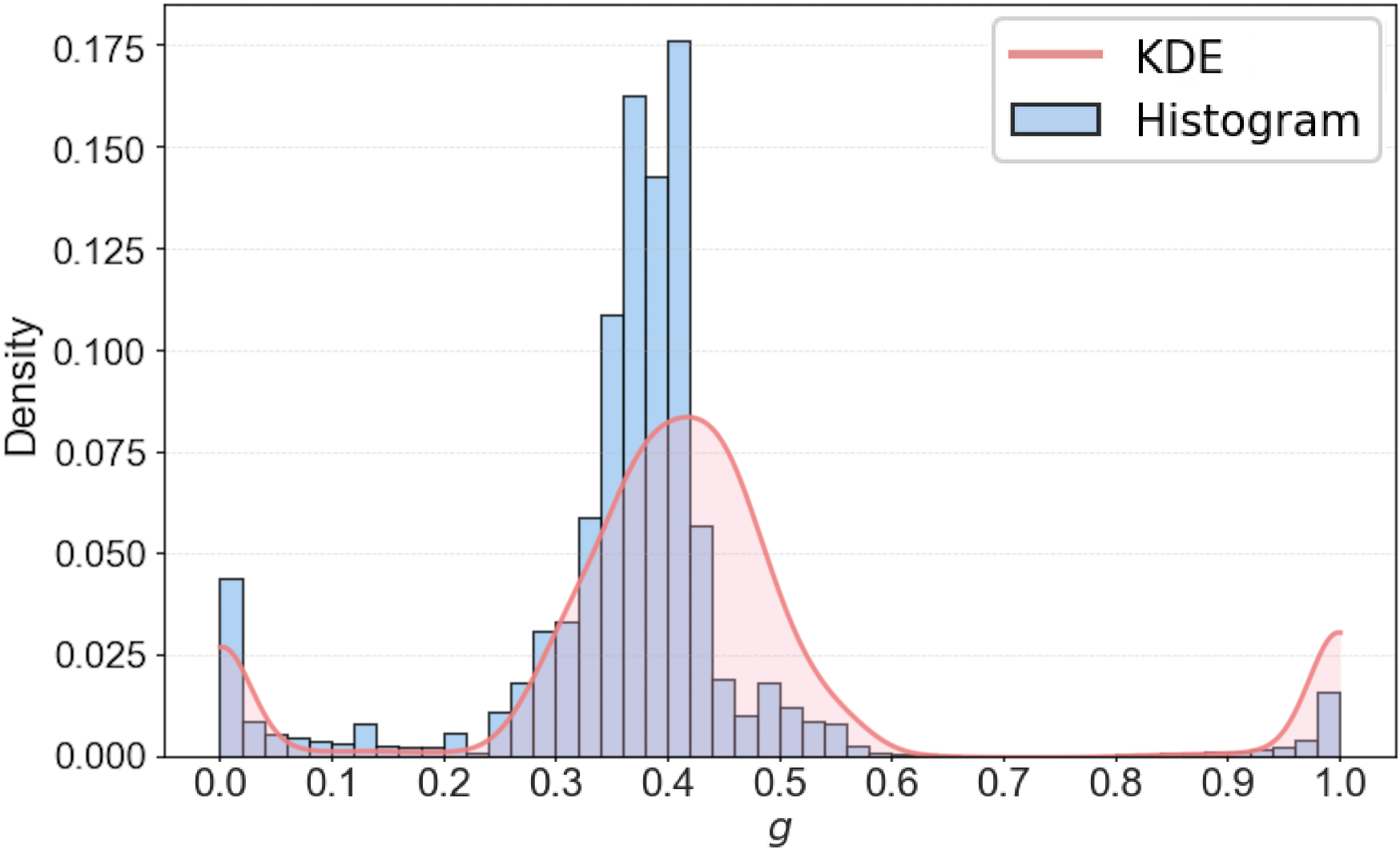}
        \caption{without manifold alignment}
    \end{subfigure}

    \caption{Distribution of gating weight $g$.}
    \label{fig:hist}
\end{figure}

\section{Conclusion}
In this work, we propose a general framework to address the cross-stage decoupling between semantic and collaborative signals in generative recommendation. Specifically, we inject textualized collaborative signals into semantic tokenization explicitly, recover semantic priors during the generation stage, and further align the underlying geometries of the two representations in a shared hyperbolic space.
Extensive experiments demonstrate that the proposed method consistently outperforms a diverse set of baseline models across multiple datasets and metrics. Comprehensive ablation studies and hyperparameter analyses further validate the effectiveness of each component and the robustness of our method.
We also implement the proposed framework as a reusable module and show that it achieves significant performance improvements across multiple backbone models with marginal additional computational overhead. These results further highlight the broad applicability and practical value of our approach.

\bibliographystyle{ACM-Reference-Format}
\bibliography{references}

\appendix

\section{Appendix}

\subsection{Theoretical Analysis of Manifold Alignment}

\begin{proposition}
Let $e_i^{\mathrm{code}} \in \mathbb{R}^d$ and $e_i^{\mathrm{sem}} \in \mathbb{R}^d$ denote the code-induced and semantic representations of item $i$, and let
\[
z_i^{c}=\phi(e_i^{\mathrm{code}})\in \mathbb{D}_c^d,\qquad
z_i^{s}=\psi(e_i^{\mathrm{sem}})\in \mathbb{D}_c^d
\]
be their projections in the shared Poincar\'e ball. Assume there exists an ideal latent representation $z_i^\star\in\mathbb{D}_c^d$ such that the conditional distribution of the target next-item code sequence $y_i$ depends on $i$ through $z_i^\star$.

Let $f_\theta:\mathbb{D}_c^d\to\mathbb{R}^m$ be the decoder-logit map and let $\ell(\cdot,y)$ be the token-level generation loss. Suppose that:

(i) there exists $L_f>0$ such that
\[
\|f_\theta(u)-f_\theta(v)\|_2 \le L_f\, d_{\mathbb D}(u,v),
\qquad \forall u,v\in\mathbb{D}_c^d;
\]

(ii) for any fixed target $y$, $\ell(\cdot,y)$ is $L_\ell$-Lipschitz, i.e.,
\[
|\ell(a,y)-\ell(b,y)|\le L_\ell \|a-b\|_2,
\qquad \forall a,b\in\mathbb{R}^m;
\]

(iii) the semantic representation satisfies
$
\mathbb{E}\!\left[d_{\mathbb D}(z_i^{s},z_i^\star)\right]\le \varepsilon_s.
$

Then
\[
\mathbb{E}\!\left[\ell(f_\theta(z_i^{c}),y_i)-\ell(f_\theta(z_i^\star),y_i)\right]
\le
L_\ell L_f
\left(
\mathbb{E}[d_{\mathbb D}(z_i^{c},z_i^{s})]+\varepsilon_s
\right).
\]
Therefore, minimizing the manifold alignment loss
\[
L_{\mathrm{geo}}=\sum_{i\in\mathcal I} d_{\mathbb D}(z_i^{c},z_i^{s})
\]
tightens an upper bound on the excess generation error induced by the re-embedded code representation relative to ideal latent representation.
\end{proposition}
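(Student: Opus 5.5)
The plan is a pointwise bound that composes the two Lipschitz assumptions with the triangle inequality for $d_{\mathbb D}$, followed by taking expectations. Fix an item $i$ and its target $y_i$. By assumption (ii), applied with $a=f_\theta(z_i^{c})$ and $b=f_\theta(z_i^\star)$,
\[
\ell(f_\theta(z_i^{c}),y_i)-\ell(f_\theta(z_i^\star),y_i)\le L_\ell\,\|f_\theta(z_i^{c})-f_\theta(z_i^\star)\|_2 .
\]
Assumption (i) then bounds the right-hand side by $L_\ell L_f\, d_{\mathbb D}(z_i^{c},z_i^\star)$. Both steps hold for every realization, so the one-sided excess loss is controlled pointwise.

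Next, I will route the geodesic distance through the semantic projection. Since $d_{\mathbb D}$ is the Riemannian distance of the Poincar\'e ball, it is a genuine metric, so the triangle inequality gives
\[
d_{\mathbb D}(z_i^{c},z_i^\star)\le d_{\mathbb D}(z_i^{c},z_i^{s})+d_{\mathbb D}(z_i^{s},z_i^\star).
\]
This metric property is the only geometric input needed. It is standard, since $d_{\mathbb D}$ is the arccosh formula used in $\mathcal{L}_{\mathrm{geo}}$, i.e. the intrinsic distance of a complete Riemannian manifold, and I will simply cite it.

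Combining these gives the pointwise bound
\[
\ell(f_\theta(z_i^{c}),y_i)-\ell(f_\theta(z_i^\star),y_i)\le L_\ell L_f\big(d_{\mathbb D}(z_i^{c},z_i^{s})+d_{\mathbb D}(z_i^{s},z_i^\star)\big).
\]
I will take expectations over the item and target, use linearity and monotonicity of expectation, and apply assumption (iii) to replace $\mathbb E[d_{\mathbb D}(z_i^{s},z_i^\star)]$ by $\varepsilon_s$. This yields the displayed inequality.

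The main obstacle is integrability, so that the expectations are well-defined. I will assume, as is implicit in the statement, that $\mathbb E[d_{\mathbb D}(z_i^{c},z_i^{s})]$ is finite and the losses are integrable; otherwise the bound holds trivially with right-hand side $+\infty$. The assumption on $z_i^\star$, namely that $y_i$ depends on $i$ only through $z_i^\star$, is not needed for the inequality itself. It only justifies reading $\ell(f_\theta(z_i^\star),y_i)$ as the ideal reference loss.

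For the final remark, I will identify the empirical average $\frac{1}{|\mathcal I|}\sum_i d_{\mathbb D}(z_i^{c},z_i^{s})$, which is $\mathcal{L}_{\mathrm{geo}}/|\mathcal I|$, with the empirical counterpart of $\mathbb E[d_{\mathbb D}(z_i^{c},z_i^{s})]$ under the uniform distribution over items. Lowering $\mathcal{L}_{\mathrm{geo}}$ therefore lowers the only term of the bound that depends on the code embeddings, since $\varepsilon_s$ is fixed by the frozen semantic side. This is the sense in which the bound is \emph{tightened}.
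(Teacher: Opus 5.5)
Your proposal is correct and follows the paper's proof step for step: Lipschitzness of $\ell$, then Lipschitzness of $f_\theta$ with respect to $d_{\mathbb D}$, the triangle inequality through $z_i^{s}$, and finally taking expectations and applying (iii); your remarks on integrability and on the unused conditional-dependence assumption are accurate additions. One small slip is in your closing interpretation: only $\mathbf{h}_i$ is frozen in the paper, while $\mathbf{W}_s$ and $\psi$ are learnable, so $\varepsilon_s$ is not literally fixed during training, and the ``tightening'' reading (like the paper's) holds only when $\varepsilon_s$ is treated as a given constant.
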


\begin{proof}
For any item $i$, by the Lipschitz continuity of $\ell(\cdot,y_i)$,
\begin{align*}
\ell(f_\theta(z_i^{c}),y_i)-\ell(f_\theta(z_i^\star),y_i)
&\le
\left|\ell(f_\theta(z_i^{c}),y_i)-\ell(f_\theta(z_i^\star),y_i)\right| \\
&\le
L_\ell \|f_\theta(z_i^{c})-f_\theta(z_i^\star)\|_2.
\end{align*}
By the Lipschitz continuity of $f_\theta$ with respect to the hyperbolic distance,
\[
\|f_\theta(z_i^{c})-f_\theta(z_i^\star)\|_2
\le
L_f\, d_{\mathbb D}(z_i^{c},z_i^\star).
\]
Hence,
\[
\ell(f_\theta(z_i^{c}),y_i)-\ell(f_\theta(z_i^\star),y_i)
\le
L_\ell L_f\, d_{\mathbb D}(z_i^{c},z_i^\star).
\]
Since $d_{\mathbb D}$ is a metric on the Poincar\'e ball, it satisfies the triangle inequality:
\[
d_{\mathbb D}(z_i^{c},z_i^\star)
\le
d_{\mathbb D}(z_i^{c},z_i^{s}) + d_{\mathbb D}(z_i^{s},z_i^\star).
\]
Therefore,
\[
\ell(f_\theta(z_i^{c}),y_i)-\ell(f_\theta(z_i^\star),y_i)
\le
L_\ell L_f
\Big(
d_{\mathbb D}(z_i^{c},z_i^{s}) + d_{\mathbb D}(z_i^{s},z_i^\star)
\Big).
\]
Taking expectation on both sides gives
\[
\mathbb{E}\!\left[\ell(f_\theta(z_i^{c}),y_i)-\ell(f_\theta(z_i^\star),y_i)\right]
\le
L_\ell L_f
\left(
\mathbb{E}[d_{\mathbb D}(z_i^{c},z_i^{s})]
+
\mathbb{E}[d_{\mathbb D}(z_i^{s},z_i^\star)]
\right).
\]
Applying assumption (iii), we obtain
\[
\mathbb{E}\!\left[\ell(f_\theta(z_i^{c}),y_i)-\ell(f_\theta(z_i^\star),y_i)\right]
\le
L_\ell L_f
\left(
\mathbb{E}[d_{\mathbb D}(z_i^{c},z_i^{s})]+\varepsilon_s
\right).
\]
This completes the proof.
\end{proof}

\begin{remark}
Assumption (i) is a standard smoothness condition on the predictor. Similar Lipschitz assumptions are widely adopted in prior studies ~\cite{anil2019sorting}.
Assumption (ii) is also standard in statistical learning theory where Lipschitz losses are routinely assumed~\cite{hou2023instance}.
Assumption (iii) can be viewed as a bounded-error proxy assumption, where the semantic representation approximates the ideal latent representation on average, without requiring exact matching.
Similar structural assumptions are common in prior studies where representations are assumed to share a common latent cause~\cite{wang2015deep}.
\end{remark}

\end{document}